\documentclass[11pt]{article}
\usepackage[T1]{fontenc}
\usepackage[utf8]{inputenc}
\usepackage{lmodern}
\usepackage[letterpaper,margin=1in]{geometry}
\usepackage{amsmath,amssymb,amsthm,mathtools}
\usepackage{microtype}
\usepackage{xcolor}
\usepackage{graphicx}
\usepackage{hyperref}
\hypersetup{colorlinks=true,linkcolor=magenta!75!black,citecolor=blue,urlcolor=blue,
  pdftitle={Energy-constrained two-way capacities of the pure-loss channel},
  pdfauthor={Alessandro Falco, Francesco Anna Mele, Ludovico Lami, Vittorio Giovannetti},pdfsubject={Quantum and secret-key capacities under an average photon-number constraint},
  bookmarksnumbered=true}
\newtheorem{theorem}{Theorem}[section]
\newtheorem{lemma}[theorem]{Lemma}
\newtheorem{corollary}[theorem]{Corollary}
\newtheorem{proposition}[theorem]{Proposition}
\theoremstyle{definition}

\newcommand{\cL}{\mathcal L}
\newcommand{\cH}{\mathcal H}
\newcommand{\id}{\operatorname{id}}
\newcommand{\Tr}{\operatorname{Tr}}
\newcommand{\SEP}{\operatorname{SEP}}
\newcommand{\Sym}{\operatorname{Sym}}
\newcommand{\ER}{E_R}
\newcommand{\ERi}{E_R^\infty}
\newcommand{\EE}{\mathbb E}
\newcommand{\CC}{\mathbb C}
\newcommand{\ket}[1]{\lvert #1\rangle}
\newcommand{\bra}[1]{\langle #1\rvert}
\newcommand{\ip}[2]{\langle #1\vert #2\rangle}
\newcommand{\proj}[1]{\ket{#1}\!\bra{#1}}
\newcommand{\norm}[1]{\lVert #1\rVert}

\newcommand{\dd}{\mathrm d}
\newcommand{\Nhat}{\widehat N}
\newcommand{\nhat}{\hat n}

\allowdisplaybreaks[1]
\author{
Alessandro Falco$^{1,*}$
\quad
Francesco Anna Mele$^{2}$
\quad
Ludovico Lami$^{1}$
\quad
Vittorio Giovannetti$^{3}$
}
\makeatletter
\newcommand{\printauthors}{\@author}
\makeatother
\newcommand{\affiliation}[1]{{\small\itshape #1\par}}

\begin{document}
\begin{center}
  {\Large\bfseries Energy-constrained two-way capacities of the pure-loss channel\par}
  \vspace{12pt}
  {\large\printauthors\par}
  \vspace{5pt}
  \affiliation{$^{1}$Scuola Normale Superiore, Piazza dei Cavalieri 7,
  56126 Pisa, Italy}
  \vspace{2pt}
  \affiliation{$^{2}$Institute for Quantum Information and Matter,
  California Institute of Technology,\\
  Pasadena, California 91125, USA}
  \vspace{2pt}
  \affiliation{$^{3}$NEST, Scuola Normale Superiore and Istituto Nanoscienze-CNR,
Piazza dei Cavalieri 7, I-56126 Pisa, Italy}
  \vspace{3pt}
\begingroup
\renewcommand{\thefootnote}{*}
\footnotetext{\href{mailto:alessandro.falco@sns.it}{alessandro.falco@sns.it}}
\endgroup
\vspace{3pt}
\end{center}
\vspace{3pt}
\begin{abstract}
    We determine the two-way quantum and secret-key capacities of the pure-loss channel under an average input photon-number constraint. For transmissivity $\eta$ and photon budget $N$, both capacities equal $g(N)-g((1-\eta)N)$, where $g$ is the entropy of a thermal state. This coincides with the rate achieved by the known reverse coherent information of a two-mode squeezed vacuum. Our result provides the last missing capacity formula for the celebrated pure-loss channel, completing the picture of bosonic quantum communication under pure loss.
\end{abstract}

\section{Main result}
\label{sec:main}

The single-mode pure-loss channel $\cL_\eta$, with transmissivity
$\eta\in[0,1]$, mixes the input with a vacuum environment on a beam
splitter~\cite{HW01,Weed12}. We use its dilation
\begin{equation}
 V_\eta\ket{k}
 =\sum_{j=0}^{k}\sqrt{\binom{k}{j}\eta^{k-j}(1-\eta)^j}
   \ket{k-j}_B\ket{j}_E,
 \qquad
 \cL_\eta(\rho)=\Tr_E V_\eta\rho V_\eta^\dagger.
 \label{eq:loss}
\end{equation}
Thus $j$ is the number of photons lost. The complementary channel is
$\cL_{1-\eta}$, up to the choice of an environment phase.

We consider the standard energy-constrained, two-way-assisted communication
model~\cite{DSW18,MLG25}. Alice and Bob start with local states and may share
public randomness, but no entanglement or secret key. They alternate $n$
uses of $\cL_\eta$ with arbitrary local operations and classical
communication (LOCC)~\cite{CLMOW14}. At each fixed blocklength, a code
uses finitely many local rounds and finite or countable classical message
alphabets; no uniform bound on the number of rounds is imposed as $n$ grows.
All operations act on trace-class states on separable Hilbert spaces;
their adjoints are normal.
The classical channel is authenticated and public.
Eve receives the
channel environments and the public transcript. The only energy constraint
is
\begin{equation}
 \frac1n\sum_{i=1}^{n}\Tr\!\left[\nhat\,\bar\rho_{S_i}\right]\le N,
 \qquad \nhat=a^\dagger a,
 \label{eq:energy-constraint}
\end{equation}
where $\bar\rho_{S_i}$ is the unconditional input state at use $i$:
previous measurement outcomes have already been averaged over. Local
ancillas and memories are unrestricted.

The capacities $Q_2(\cL_\eta,N)$ and $K(\cL_\eta,N)$ are the supremal
asymptotic rates of entanglement and secret-key generation, respectively,
with error tending to zero. Rates are measured in ebits and secret bits per
channel use. With free classical communication, entanglement generation and
quantum transmission have the same capacity by
teleportation~\cite{BBC93,DSW18}. For secret-key generation, we use the standard correctness and secrecy
error~\cite{PR22}. For entanglement generation, the error is the
infidelity with a maximally entangled state. Both errors are evaluated
on the unconditional output.

\begin{theorem}[Energy-constrained two-way capacities]
\label{thm:main}
For every $\eta\in[0,1]$ and every finite $N\ge0$,
\begin{equation}
 \boxed{\quad
 Q_2(\cL_\eta,N)=K(\cL_\eta,N)
 =g(N)-g\bigl((1-\eta)N\bigr),
 \quad}
 \label{eq:main}
\end{equation}
where $g(x)=(x+1)\log_2(x+1)-x\log_2x$ and $0\log_2 0=0$.
The optimization includes all adaptive LOCC protocols satisfying
\eqref{eq:energy-constraint}.
\end{theorem}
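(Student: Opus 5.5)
The proof splits into achievability and a converse. Since $K\ge Q_2$ (entanglement can be converted to key), it suffices to show $Q_2(\cL_\eta,N)\ge g(N)-g((1-\eta)N)$ and $K(\cL_\eta,N)\le g(N)-g((1-\eta)N)$.

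\emph{Achievability.} Alice prepares i.i.d.\ two-mode squeezed vacua $\ket{\psi_N}_{AS}$ with mean photon number $N$ on $S$, so \eqref{eq:energy-constraint} holds with equality. She sends $S$ through $\cL_\eta$. Bob then runs a hashing protocol with one-way classical communication from Bob to Alice, achieving the reverse coherent information $I_R=S(\rho_A)-S(\rho_{AB})$. Here $S(\rho_A)=g(N)$. Since the joint output is purified by $E$, we have $S(\rho_{AB})=S(\rho_E)=g((1-\eta)N)$, because the complementary channel is $\cL_{1-\eta}$. The hashing bound is stated for finite dimensions, so a truncation argument is needed: project $A$ onto photon numbers at most $M$ and let $M\to\infty$, using continuity of entropy under the energy bound. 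This gives a rate of $g(N)-g((1-\eta)N)$, and for $\eta=0$ or $N=0$ the formula gives $0$ trivially.

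\emph{Converse.} I would bound $K$ by the energy-constrained relative entropy of entanglement via the teleportation-simulation and stretching framework. The obstacle is that the pure-loss channel is not teleportation-simulable with finite resources, and the PLOB bound $-\log_2(1-\eta)$ ignores the energy constraint. I would therefore use the standard amortization argument directly. For any adaptive protocol, the final key rate is at most $\frac1n E_R$ of the final state. Each channel use increases $E_R$ by at most the amortized quantity
\[
\sup_{\rho_{A'AS}} \bigl[E_R\bigl(A'{:}B\,B'\bigr)_{(\id\otimes\cL_\eta)(\rho)}-E_R(A'A{:}B')_\rho\bigr],
\]
taken with $\Tr[\nhat\rho_S]=N_i$. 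The goal is to bound this increment by $f(N_i)=g(N_i)-g((1-\eta)N_i)$. Since $f$ is concave in $N$ (a difference of this form along a linear path is concave, which I would verify by differentiation), Jensen's inequality over uses gives $\frac1n\sum_i f(N_i)\le f(N)$. Conditioning on earlier outcomes is handled by averaging, as the constraint is on unconditional states.

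The hardest step is the single-use amortized bound with energy. My plan is to choose as the separable reference in $E_R$ of the output the image, under $\id\otimes\cL_\eta$, of a separable state that is nearly optimal for the input. I would then bound the excess using the Stinespring picture: the increase is at most $S(B)-S(E)$ conditioned on the classical flags, i.e.\ a reverse-coherent-information-type quantity. For the pure-loss channel, Gaussian extremality together with the entropy photon-number inequality, or the known concavity and monotonicity results for $g(\eta N')-g((1-\eta)N')$ across Fock decompositions, caps this at $f(N_i)$. Concretely, I would decompose the input into a separable mixture. By convexity of $E_R$, only the entanglement generated across the cut by the photons actually sent matters. Since entanglement is generated only through transmitted photons and each transmitted pattern with $j$ losses leaks to Eve, a direct calculation in the Fock basis of \eqref{eq:loss} should bound the increment by $g(N_i)-g((1-\eta)N_i)$, with equality for the TMSV. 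Making this amortization rigorous for infinite-dimensional, countable-round protocols, using lower semicontinuity and truncation, is where I expect most of the technical work.
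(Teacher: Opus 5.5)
Your reduction $K\ge Q_2$ and your reverse-hashing achievability with truncation are fine. The Jensen step is also fine, though it needs $f_\eta$ nondecreasing as well as concave. The gap is the central step of the converse: the single-use amortized bound $\ER(X{:}BY)_\omega-\ER(XS{:}Y)_\rho\le f_\eta(\nu)$. (Here I use the paper's notation: Alice holds $XS$, Bob holds $Y$.) None of your proposed mechanisms establishes it.

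\emph{The reference state is not separable.} The image $(\id\otimes\cL_\eta)(\sigma)$ of a state $\sigma\in\SEP(XS{:}Y)$ is not separable across $X{:}BY$, because the channel moves $S$ from Alice to Bob. Take $Y$ trivial and $\rho_{XS}$ a two-mode squeezed vacuum. The optimal input reference is then $\rho$ itself, and its image is the entangled output $\omega$. Your choice would therefore ``prove'' $\ER(\omega)=0$.

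\emph{The entropic quantity is the wrong one.} $S(B)-S(E)$ is the forward coherent information, not a reverse one. Its energy-constrained maximum is $\max\{0,g(\eta N)-g(qN)\}$, which is strictly below $f_\eta(N)$ for $0<\eta<1$ and $N>0$. A bound of that form would contradict your own achievability.

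\emph{The cited tools do not apply.} Gaussian extremality and the entropy photon-number inequality concern output entropies of Gaussian channels. They say nothing about $\ER$ with arbitrary non-Gaussian memories, which is exactly what amortization requires. The ``direct calculation in the Fock basis'' is the missing proof, not a step of it. It is not even clear that the single-copy increment is at most $f_\eta(\nu)$: the paper's block bound with $m=1$ gives only $f_\eta(\nu)+g(q\nu)=g(\nu)$.

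The missing idea is how to make the channel simulable with a finite resource. The paper groups $m$ copies and dephases the total photon number $K$ at both input and output, at cost at most $H(K)$ by Lemma~\ref{lem:pinch} and phase covariance. Restricted to $\Sym^k(\CC^m)$, the channel $\cL_\eta^{\otimes m}$ is covariant under passive $U(m)$ transformations acting irreducibly. It is therefore exactly LOCC-simulable from its finite-dimensional sector Choi state. That state has relative entropy of entanglement at most $\log_2 d_k-\sum_j b_{k,j}\log_2 d_j$, witnessed by the explicit separable state built from condensates $\ket{\bar u;k}\otimes\ket{u;k-j}$.

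Rewriting this bound as $S(\zeta)-S(\zeta')+H(J)$ and comparing with thermal states via data processing gives $m f_\eta(\nu)+g(qm\nu)$. The overhead is only logarithmic in $m$, so the tight constant $f_\eta(\nu)$ appears only for $\ERi$, not $\ER$. You therefore also need a key-to-$\ERi$ converse that is insensitive to the unbounded shield dimension. The paper supplies this with product privacy tests and a moment-generating-function bound; the single-copy key-to-$\ER$ conversion you invoke does not suffice.
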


Throughout, logarithms are in base $2$, and all Hilbert spaces are
separable. We write $S(\rho)=-\Tr\rho\log_2\rho$ for von Neumann
entropy and $H(p)=-\sum_a p_a\log_2p_a$ for Shannon entropy. Identity
maps on untouched systems are implicit. We abbreviate
\begin{equation}
 q:=1-\eta,
 \qquad f_\eta(x):=g(x)-g(qx).
 \label{eq:f}
\end{equation}
The formula gives $0$ at $N=0$ or $\eta=0$, and $g(N)$ at $\eta=1$.
For $0\le\eta<1$, it also gives
$\lim_{N\to\infty}f_\eta(N)=-\log_2(1-\eta)$, recovering the
unconstrained capacity~\cite{PLOB17}. Earlier upper bounds for two-way
bosonic communication were obtained from squashed entanglement;
see~\cite{TGW14,GEW16,DSW18}. The achievable function $f_\eta$ comes from
reverse reconciliation and reverse coherent
information~\cite{PGBL09,GPLS09}.

Figure~\ref{fig:capacities-vs-eta} compares these rates with the other
capacities of the pure-loss channel at $N=1$. For $0<\eta\le1/2$, the
unassisted quantum and private classical capacities vanish~\cite{WHG12,WQ18},
whereas $Q_2$ and $K$ remain positive. At $\eta=1/2$, for example,
$Q_2=K\simeq0.6226$.

\begin{figure}[t]
  \centering
  \includegraphics[width=0.94\textwidth]{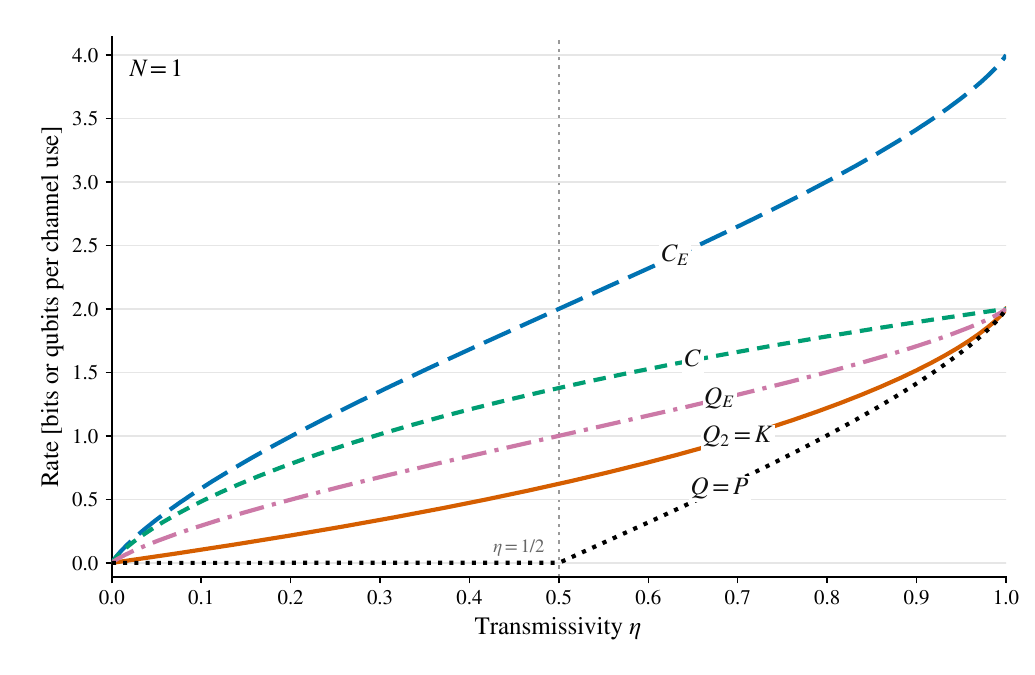}
  \caption{Capacities of the single-mode pure-loss channel as functions of
  transmissivity $\eta$, for an average input photon budget $N=1$.
  All rates are asymptotic, with error tending to zero.
  The two-way quantum and secret-key capacities are
  $Q_2=K=g(N)-g((1-\eta)N)$ (solid), by Theorem~\ref{thm:main}.
  The unassisted classical capacity is $C=g(\eta N)$
  (short-dashed)~\cite{GGLLSY04}.
  With unlimited preshared entanglement, the classical capacity is
  $C_E=g(N)+g(\eta N)-g((1-\eta)N)$
  (long-dashed)~\cite{GLMS03}, and the quantum capacity is
  $Q_E=C_E/2$ (dash-dotted)~\cite{BSST02}.
  The unassisted quantum and private classical capacities are
  $Q=P=\max\{0,g(\eta N)-g((1-\eta)N)\}$
  (dotted)~\cite{WHG12,WQ18}.
  The vertical guide marks $\eta=1/2$; $Q=P=0$ at and below this value.
  Classical and key rates are in bits per channel use; quantum rates are
  in qubits per channel use.}
  \label{fig:capacities-vs-eta}
\end{figure}

\paragraph{Proof outline.}
The lower bound is reverse hashing~\cite{DW05,GPLS09}; we recall it in
Section~\ref{sec:achievability}. For finite input energy and finite input
relative entropy of entanglement, the central converse statement is
\begin{equation}
 \ERi(X:BY)_{(\id\otimes\cL_\eta)(\rho)}
 \le \ERi(XS:Y)_\rho+f_\eta\!\left(\Tr[\nhat\rho_S]\right).
 \label{eq:outline-amortized}
\end{equation}
Here $XS$ belongs to Alice before transmission, $Y$ belongs to Bob, and
$\ERi$ is the regularized relative entropy of entanglement. The inequality
allows arbitrary correlations with both memories. It can therefore be
summed over the uses of an adaptive protocol. We prove it by grouping
independent copies, resolving their total photon number, and bounding the
relative entropy of entanglement of the sector Choi states. The remaining dephasing
cost grows only logarithmically with the number of copies. Finally, a
privacy test converts the entanglement bound into a bound on secret-key
length that is independent of the size of the shield systems.

\section{Relative entropy tools}
\label{sec:tools}

For states $\rho,\sigma$, their quantum relative entropy is
$D(\rho\Vert\sigma)=\Tr\rho(\log_2\rho-\log_2\sigma)$, with its usual
extended-value interpretation. We use its data-processing inequality,
$D(\rho\Vert\sigma)\ge D(\mathcal T(\rho)\Vert\mathcal T(\sigma))$
for every quantum channel $\mathcal T$~\cite{Lin75}, with the
separable-Hilbert-space formulation covered by~\cite{MR17}.

For a bipartite state, the relative entropy of entanglement and its
regularization are~\cite{VPRK97,VP98}
\begin{equation}
 \ER(A:B)_\rho:=\inf_{\sigma\in\SEP(A:B)}D(\rho\Vert\sigma),
 \qquad
 \ERi(A:B)_\rho:=\lim_{r\to\infty}\frac1r\ER(A^r:B^r)_{\rho^{\otimes r}}.
 \label{eq:ER-def}
\end{equation}
The separable states form the trace-norm closed convex hull of product
states. The second limit exists whenever $\ER(\rho)<\infty$, by
subadditivity. We omit the bipartition when it is clear.

We use three standard properties of $\ER$: convexity, subadditivity on
tensor products, and monotonicity on average under local instruments
and LOCC~\cite{VP98}. For completeness, the last property follows by
applying data processing to an instrument with its outcome retained.
For any separable comparison state, the resulting relative entropy is a
classical relative entropy plus an average of conditional relative
entropies. Dropping the first term and optimizing gives
$\sum_a p_a\ER(\rho_a)\le\ER(\rho)$.
Convexity follows from joint convexity of relative entropy, and
subadditivity follows by taking product comparison states. These arguments
also apply on separable infinite-dimensional Hilbert spaces. We will use
that $\ER$ is lower semicontinuous in trace norm~\cite[Corollary~20]{LS23}.
The regularized quantity is monotone under LOCC because this holds for
every tensor power before taking the limit.

\subsection{The cost of local dephasing}

The following finite-dimensional observation is the form of
non-lockability that we need; see also~\cite{HHHO05a}.

\begin{lemma}[Local dephasing]
\label{lem:pinch}
Let $\Delta(\rho)=\sum_a P_a\rho P_a$ be a projective dephasing on one
party of a finite-dimensional bipartite system, and let
$p_a=\Tr(P_a\rho)$. Then
\begin{equation}
 \ER(\rho)\le\ER(\Delta\rho)+D(\rho\Vert\Delta\rho),
 \qquad D(\rho\Vert\Delta\rho)\le H(p).
 \label{eq:pinch}
\end{equation}
\end{lemma}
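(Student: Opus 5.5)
The plan is to prove both inequalities separately. The first comes from a chain rule for relative entropy against dephased comparison states. The second is a standard entropy identity.

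For the first inequality, fix any separable $\sigma$ and set $\Delta\sigma=\sum_a P_a\sigma P_a$. Because $\Delta$ is a local channel acting on one party's side, it maps product states to product states. It is also linear and continuous, so $\Delta\sigma$ is again separable. The key step is the identity
\begin{equation*}
 D(\rho\Vert\Delta\sigma)=D(\rho\Vert\Delta\rho)+D(\Delta\rho\Vert\Delta\sigma).
\end{equation*}
To prove it, note that $\log_2\Delta\sigma$ commutes with every $P_a$, since $\Delta\sigma$ is block diagonal and so is its logarithm on its support. Hence $\Tr\rho\log_2\Delta\sigma=\Tr(\Delta\rho)\log_2\Delta\sigma$, and likewise $\Tr\rho\log_2\Delta\rho=\Tr(\Delta\rho)\log_2\Delta\rho$. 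Expanding both sides then gives the identity term by term.

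Support issues need a short check. If $\operatorname{supp}\Delta\rho\not\subseteq\operatorname{supp}\Delta\sigma$, the right-hand side is infinite and the identity holds trivially. Otherwise $\operatorname{supp}\rho\subseteq\operatorname{supp}\Delta\rho\subseteq\operatorname{supp}\Delta\sigma$, and in finite dimension all quantities are finite, so the computation is legitimate. The first inclusion holds because $\rho\le\sum_a P_a\rho P_a$ up to a multiple. More precisely, the pinching inequality $\rho\le m\,\Delta\rho$ holds with $m$ the number of projectors.

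With the identity in hand, take the infimum over $\sigma\in\SEP$. This gives
\begin{equation*}
 \ER(\rho)\le\inf_\sigma D(\rho\Vert\Delta\sigma)=D(\rho\Vert\Delta\rho)+\inf_\sigma D(\Delta\rho\Vert\Delta\sigma).
\end{equation*}
By data processing under $\Delta$, the last infimum is at most $D(\Delta\rho\Vert\sigma)$ for every $\sigma\in\SEP$. It is also at least $\ER(\Delta\rho)$, because every $\Delta\sigma$ is separable. Either way it equals $\ER(\Delta\rho)$, which proves the first inequality.

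For the second inequality, use $D(\rho\Vert\Delta\rho)=S(\Delta\rho)-S(\rho)$, which follows from $\Tr\rho\log_2\Delta\rho=\Tr\Delta\rho\log_2\Delta\rho$. Write $\Delta\rho=\bigoplus_a p_a\rho_a$ with $\rho_a=P_a\rho P_a/p_a$. Then
\begin{equation*}
 S(\Delta\rho)=H(p)+\sum_a p_a S(\rho_a).
\end{equation*}

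It remains to show $\sum_a p_aS(\rho_a)\le S(\rho)$. Since $P_a\rho P_a=(\sqrt\rho P_a)^\dagger(\sqrt\rho P_a)$, each $P_a\rho P_a$ has the same nonzero spectrum as $\rho_a':=\sqrt\rho P_a\sqrt\rho/p_a$. These states satisfy $\sum_ap_a\rho_a'=\rho$, so concavity of entropy gives $\sum_ap_aS(\rho_a)=\sum_ap_aS(\rho_a')\le S(\rho)$. Hence $D(\rho\Vert\Delta\rho)\le H(p)$.

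The only delicate point in the whole argument is the log-commutation/support bookkeeping in the chain-rule identity. In finite dimension it is routine.
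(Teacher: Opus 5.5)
Your proof is correct. The first inequality follows the paper's route: dephase the separable comparison state, split $D(\rho\Vert\Delta\sigma)=D(\rho\Vert\Delta\rho)+D(\Delta\rho\Vert\Delta\sigma)$ by block diagonality, and identify the infimum over dephased separable states with $\ER(\Delta\rho)$. Your support bookkeeping is more explicit than the paper's.

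One minor point: when $\operatorname{supp}\Delta\rho\not\subseteq\operatorname{supp}\Delta\sigma$ you only note that the right-hand side is infinite. That settles the direction you actually use. The identity itself also holds there, because the support projection of $\Delta\sigma$ commutes with each $P_a$, so $\operatorname{supp}\rho\subseteq\operatorname{supp}\Delta\sigma$ would force $\operatorname{supp}\Delta\rho\subseteq\operatorname{supp}\Delta\sigma$.

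For the second inequality you take a different route. The paper purifies $\rho$ to $\ket{\psi}$. The dephased purification is the mixture of the orthonormal vectors $P_a\ket{\psi}/\sqrt{p_a}$ with weights $p_a$, so $D(\proj{\psi}\Vert\Delta\proj{\psi})=H(p)$ exactly. The purifying system is then discarded by data processing.

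You instead write $D(\rho\Vert\Delta\rho)=S(\Delta\rho)-S(\rho)$ and use the grouping rule for $S(\Delta\rho)$. You then apply concavity of entropy to the isospectral ensemble $\sqrt\rho P_a\sqrt\rho/p_a$, which averages to $\rho$.

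The two arguments are dual: in the canonical purification, your ensemble is the transposed ensemble of states the purifying system holds conditioned on outcome $a$. Yours needs only concavity and the shared nonzero spectrum of $X^\dagger X$ and $XX^\dagger$. It also exhibits the slack $H(p)-D(\rho\Vert\Delta\rho)$ as the Holevo quantity of that ensemble. The paper's version never forms $S(\Delta\rho)-S(\rho)$, so it applies verbatim to infinite-entropy states, where your decomposition would read $\infty-\infty$. Since the lemma is stated in finite dimension, both proofs are complete.
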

\begin{proof}
In minimizing $D(\Delta\rho\Vert\sigma)$ over separable $\sigma$, we may
dephase $\sigma$: separability is preserved and data processing cannot
increase the objective. For $\Delta\sigma=\sigma$, block diagonality gives
\[
 D(\rho\Vert\sigma)
 =D(\rho\Vert\Delta\rho)+D(\Delta\rho\Vert\sigma).
\]
Optimizing proves the first inequality. For the second, purify $\rho$ to
$\ket{\psi}$. The nonzero vectors $P_a\ket{\psi}/\sqrt{p_a}$ are orthonormal,
so the relative entropy of $\proj{\psi}$ from its dephased state is
$-\sum_a p_a\log_2p_a=H(p)$. Tracing out the purification and applying data
processing proves the claim.
\end{proof}

\subsection{Simulation from an irreducible symmetry}

Write $\ket{\Phi_d}=d^{-1/2}\sum_{a=1}^d\ket{a}\ket{a}$ and
$\Phi_d=\proj{\Phi_d}$. A finite-dimensional channel $\mathcal T$ is
simulable by LOCC using its normalized Choi state
$\Omega_{\mathcal T}=(\id\otimes\mathcal T)(\Phi_d)$ whenever it is
covariant under a compact group whose input representation is irreducible:
\begin{equation}
 \mathcal T(U_g\rho U_g^\dagger)=W_g\mathcal T(\rho)W_g^\dagger.
 \label{eq:covariance}
\end{equation}
The required simulation is given in~\cite[Proposition~24]{GBW21}.
To check its normalization,
Alice measures the input and her half of $\Omega_{\mathcal T}$ with the
Bell POVM, with normalized Haar measure $\mu$,
\[
 d^2(I\otimes\bar U_g)\Phi_d(I\otimes\bar U_g)^\dagger\,\dd\mu(g).
\]
Schur's lemma makes its integral the identity. Finite-dimensional
convexity also allows a finite weighted sum with the same operator
average, so finitely many measurement outcomes suffice. Conditional on $g$,
Bob's
unnormalized state is $\mathcal T(U_g\rho U_g^\dagger)\dd\mu(g)$; he applies
$W_g^\dagger$. This implements $\mathcal T$ exactly, also on inputs
correlated with other systems. Consequently, for any such input $\rho$,
\begin{equation}
 \ER(\text{output})\le\ER(\text{input})+\ER(\Omega_{\mathcal T}).
 \label{eq:simulation-cost}
\end{equation}
Indeed, the simulation is LOCC, and its initial state is the product of
$\rho$ with $\Omega_{\mathcal T}$. The resource term in
\eqref{eq:simulation-cost} is independent of the local memories.

\section{A fixed number of photons}
\label{sec:sectors}

Fix $m\ge1$ modes. The sector with $k$ photons is
\begin{equation}
 \cH_{m,k}=\Sym^k(\CC^m),
 \qquad d_k:=\dim\cH_{m,k}=\binom{m+k-1}{k},
 \qquad \pi_k:=\frac{\Pi_{m,k}}{d_k}.
 \label{eq:sectors}
\end{equation}
Here $\Pi_{m,k}$ is the projector onto that sector; $m$ is suppressed in
$d_k$ and $\pi_k$. The binomial coefficient counts the ways to distribute
$k$ photons among $m$ occupation numbers. Passive mode transformations $U\in U(m)$ act as
$U^{\otimes k}$ restricted to $\Sym^k(\CC^m)$, an irreducible
representation~\cite{Har13}.

For a unit vector $u\in\CC^m$, let
\begin{equation}
 \ket{u;k}:=\frac{\bigl(\sum_{a=1}^m u_a a_a^\dagger\bigr)^k}{\sqrt{k!}}
 \ket{0}.
 \label{eq:condensate}
\end{equation}
This is the normalized state of $k$ photons in the same mode $u$.
Averaging uniformly over $u$ gives
\begin{equation}
 \int\proj{u;k}\,\dd\mu(u)=\pi_k.
 \label{eq:resolution}
\end{equation}
The integral is invariant under every mode transformation, so
irreducibility makes it proportional to $\Pi_{m,k}$; its trace fixes the
factor $1/d_k$. This is the standard symmetric-subspace
resolution~\cite{Har13}.

\subsection{Conditioning on the number lost}

With $k$ input photons, the total number $J$ lost has distribution
\begin{equation}
 b_{k,j}:=\Pr(J=j\mid k)=\binom{k}{j}q^j\eta^{k-j},
 \qquad 0\le j\le k,
 \label{eq:binomial}
\end{equation}
independently of the state within the sector. Here and below endpoint
probabilities are understood by continuity. To verify this for arbitrary
inputs, write a Fock occupation vector as $\mathbf a=(a_1,\ldots,a_m)$ with
$|\mathbf a|=k$. The normalized isometry for the branch with $j$ losses is
\begin{equation}
 V_{k,j}\ket{\mathbf a}
 =\sum_{\substack{\mathbf t\le\mathbf a\\|\mathbf t|=j}}
 \sqrt{\frac{\prod_{s=1}^m\binom{a_s}{t_s}}{\binom{k}{j}}}
 \ket{\mathbf a-\mathbf t}_B\ket{\mathbf t}_E.
 \label{eq:conditional-isometry}
\end{equation}
Different $\mathbf a$ give orthogonal outputs. Their squared norms are one
by Vandermonde's identity,
$\sum_{|\mathbf t|=j}\prod_s\binom{a_s}{t_s}=\binom{k}{j}$.
Expanding \eqref{eq:loss} on all $m$ modes therefore gives the branch
amplitude $\sqrt{b_{k,j}}V_{k,j}$, proving~\eqref{eq:binomial}.
Expanding~\eqref{eq:condensate} also gives the useful identity
\begin{equation}
 V_{k,j}\ket{u;k}=\ket{u;k-j}_B\ket{u;j}_E.
 \label{eq:splitting}
\end{equation}
Together with~\eqref{eq:resolution}, this implies that the environment
marginal of $V_{k,j}\pi_kV_{k,j}^\dagger$ is $\pi_j$.

\subsection{An explicit separable comparison state}

Let $\mathcal T_{m,k}$ be $\cL_\eta^{\otimes m}$ restricted to $\cH_{m,k}$,
and let $\Omega_{m,k}$ be its normalized Choi state. Its output lies in the
finite-dimensional space $\bigoplus_{\ell=0}^k\cH_{m,\ell}$.
The channel is covariant under passive mode transformations, with an
irreducible input representation. Section~\ref{sec:tools} therefore gives
an exact simulation using $\Omega_{m,k}$.

\begin{lemma}[Relative entropy of the sector Choi state]
\label{lem:sector-cost}
For all $m\ge1$ and $k\ge0$,
\begin{equation}
 \ER(\Omega_{m,k})\le c_m(k),
 \qquad
 c_m(k):=\log_2d_k-\sum_{j=0}^k b_{k,j}\log_2d_j.
 \label{eq:sector-cost}
\end{equation}
\end{lemma}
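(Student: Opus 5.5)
Our plan is to exhibit one separable comparison state $\sigma$ with $D(\Omega_{m,k}\Vert\sigma)\le c_m(k)$. First we identify the structure of $\Omega_{m,k}$. We write the maximally entangled state on $\bar{\cH}_{m,k}\otimes\cH_{m,k}$ and apply $\sum_j\sqrt{b_{k,j}}V_{k,j}$. The branches with different $j$ land in orthogonal sectors of $B$, since $B$ then holds $k-j$ photons. Hence, after tracing out $E$,
\[
 \Omega_{m,k}=\bigoplus_{j=0}^k b_{k,j}\,\Omega^{(j)},
\]
where $\Omega^{(j)}$ is supported on $\bar{\cH}_{m,k}\otimes\cH_{m,k-j}$. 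Its marginal on Alice is $\pi_k$, and its marginal on Bob is $\pi_{k-j}$ by covariance and irreducibility. The key structural fact is that $\Omega^{(j)}$ has rank at most $d_j$. Indeed, the purification of $\Omega^{(j)}$ is the state on $A B E$ obtained from $V_{k,j}$. By the remark after~\eqref{eq:splitting}, its $E$-marginal is $\pi_j$, which has rank $d_j$. So the spectrum of $\Omega^{(j)}$ equals that of $\pi_j$, namely flat, and $S(\Omega^{(j)})=\log_2 d_j$.

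Next we choose the comparison state. We take
\[
 \sigma=\bigoplus_j b_{k,j}\,\pi_k\otimes\pi_{k-j},
\]
that is, the product of the marginals within each $j$-block, mixed with the same weights. This is separable: it is a convex combination of product states. Blocks with orthogonal Bob sectors are handled by Bob's local sector label, so each block is separately a product state. Then we compute the relative entropy. The block weights match, so the classical part vanishes, and
\[
 D(\Omega_{m,k}\Vert\sigma)=\sum_j b_{k,j}\,D\bigl(\Omega^{(j)}\Vert\pi_k\otimes\pi_{k-j}\bigr).
\]
Each term equals $-S(\Omega^{(j)})+\log_2d_k+\log_2d_{k-j}$, because the marginals coincide with the product factors. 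This gives
\[
 \sum_j b_{k,j}\bigl(\log_2 d_k+\log_2 d_{k-j}-\log_2 d_j\bigr).
\]

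This exceeds $c_m(k)$ by the extra term $\sum_j b_{k,j}\log_2 d_{k-j}$, so the naive choice is too weak. It is the mutual information, not the relative entropy of entanglement. The main obstacle is therefore a better separable $\sigma$. The fix we would try first uses the condensate decomposition. By~\eqref{eq:resolution} and~\eqref{eq:splitting}, Bob's branch-$j$ output on input $\ket{u;k}$ is the pure state $\ket{u;k-j}$. This suggests the classically correlated state
\[
 \sigma=\sum_j b_{k,j}\int \proj{\bar u;k}\otimes\proj{u;k-j}\,\dd\mu(u),
\]
which is manifestly separable, with Alice's condensate conjugated as in the Choi construction. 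We then need $-\Tr\,\Omega^{(j)}\log_2\sigma_j\le\log_2 d_k$, where $\sigma_j$ is the $j$-block. To get this, we would show that $\sigma_j$ is $U\otimes\bar U$-invariant and dominates $\Omega^{(j)}/d_k$, in the sense $\Omega^{(j)}\le d_k\,\sigma_j$ after normalization. We would obtain this operator inequality by writing $\Omega^{(j)}$ as a partial trace over $E$ of the condensate integral, using~\eqref{eq:splitting}: the Choi vector is $\propto\int\ket{\bar u;k}\ket{u;k-j}\ket{u;j}\,\dd\mu$, up to normalization constants involving $d_k$. Operator monotonicity of $\log$ then gives $D(\Omega^{(j)}\Vert\sigma_j)\le\log_2 d_k-S(\Omega^{(j)})=\log_2d_k-\log_2d_j$. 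Summing over $j$ with weights $b_{k,j}$ yields $c_m(k)$. The domination inequality is the step we expect to require the most care, specifically fixing the Haar normalization constants.
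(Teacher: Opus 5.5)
Your setup is the paper's. You use the same branch decomposition $\Omega_{m,k}=\bigoplus_j b_{k,j}\Omega^{(j)}$, the same flat-spectrum argument $S(\Omega^{(j)})=\log_2 d_j$, and the same separable state $\sigma_j=\int\proj{\bar u;k}\otimes\proj{u;k-j}\,\dd\mu(u)$. Your block-diagonal $\sum_j b_{k,j}\sigma_j$ in place of the paper's appeal to convexity is fine.

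\textbf{The gap.} It sits in the one step that carries the content of the lemma. The domination you state, $\Omega^{(j)}\le d_k\sigma_j$, is true but too weak by exactly a factor $d_j$. It only gives $D(\Omega^{(j)}\Vert\sigma_j)\le\log_2 d_k$, hence the trivial bound $\ER(\Omega_{m,k})\le\log_2 d_k$. Operator monotonicity of the logarithm applied to it yields $-\Tr\Omega^{(j)}\log_2\sigma_j\le\log_2 d_k+S(\Omega^{(j)})$, not the $\log_2 d_k$ you need. So your conclusion $D\le\log_2 d_k-S(\Omega^{(j)})$ does not follow from it. What is required is $P_{k,j}\le d_k\sigma_j$ for the support projector $P_{k,j}$, i.e.\ $\Omega^{(j)}\le (d_k/d_j)\,\sigma_j$.

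A small check: for $m=2$, $k=j=1$ one has $\Omega^{(1)}=\sigma_1=\pi_1\otimes\proj{0}$. That block should cost $0$, while your constant gives $1$. Neither the invariance of $\sigma_j$ nor careful Haar constants supplies the missing $1/d_j$. It comes from the environment: tracing out $E$ gains a factor $\lVert\pi_j\rVert=1/d_j$.

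\textbf{The paper's route.} It obtains the missing factor as an exact eigenvalue relation. With $C_t=(I\otimes\bra t_E)V_{k,j}$, identity~\eqref{eq:splitting} gives $C_t\ket{u;k}=\ip{t}{u;j}\ket{u;k-j}$. So $C_t$ can be pulled out of the integral in $\sigma_{k,j}\ket{\chi_t}$, and the vectorized resolution on $\cH_{m,k}$ gives $\sigma_{k,j}\ket{\chi_t}=\ket{\chi_t}/d_k$. Hence $D(\Omega_{k,j}\Vert\sigma_{k,j})=\log_2 d_k-\log_2 d_j$ with equality.

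\textbf{Fixing your partial-trace idea.} It also works once the constant is tracked:
\begin{itemize}
\item The branch Choi vector is $(I\otimes V_{k,j})\ket{\Phi_{d_k}}=\sqrt{d_k}\int\ket{\bar u;k}\ket{u;k-j}\ket{u;j}\,\dd\mu(u)$.
\item Define $W:L^2(\mu)\to\cH_R\otimes\cH_B$ by $Wf=\int f(u)\ket{\bar u;k}\ket{u;k-j}\,\dd\mu(u)$. Then $\sigma_j=WW^\dagger$.
\item The $E$-components are $\sqrt{d_k}\,Wf_t$ with $f_t(u)=\ip{t}{u;j}$.
\item By~\eqref{eq:resolution} for $\pi_j$, $\ip{f_s}{f_t}=\bra t\pi_j\ket s=\delta_{st}/d_j$. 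Therefore $G:=\sum_t\ket{f_t}\bra{f_t}\le I/d_j$.
\item This gives $\Omega^{(j)}=d_k\,WGW^\dagger\le (d_k/d_j)\,\sigma_j$, so $D(\Omega^{(j)}\Vert\sigma_j)\le\log_2(d_k/d_j)$.
\end{itemize}
This version yields only an inequality rather than the paper's equality. It does not even need the flat-spectrum step.
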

\begin{proof}
Fix $j$ and put $\ell=k-j$. Let
\[
 \ket{\Psi_{k,j}}=(I_R\otimes V_{k,j})\ket{\Phi_{d_k}},
 \qquad \Omega_{k,j}=\Tr_E\proj{\Psi_{k,j}}.
\]
The environment marginal is $\pi_j$, by~\eqref{eq:splitting}. Choose an
orthonormal basis $\{\ket{t}\}_{t=1}^{d_j}$ of $\cH_{m,j}$ and set
\begin{equation}
 C_t=(I_B\otimes\bra{t}_E)V_{k,j},
 \qquad
 \ket{\chi_t}=\sqrt{d_j}(I_R\otimes C_t)\ket{\Phi_{d_k}}.
 \label{eq:chi}
\end{equation}
Since the environment marginal is $\pi_j$, these vectors are orthonormal.
Thus, with $P_{k,j}:=\sum_t\proj{\chi_t}$,
\begin{equation}
 \Omega_{k,j}=\frac{P_{k,j}}{d_j}.
 \label{eq:flat-choi}
\end{equation}
Consider the separable state
\begin{equation}
 \sigma_{k,j}:=\int
 \proj{\bar u;k}_R\otimes\proj{u;\ell}_B\,\dd\mu(u),
 \label{eq:separable-comparison}
\end{equation}
where conjugation is in the Fock basis. We claim that
\begin{equation}
 \sigma_{k,j}P_{k,j}=\frac1{d_k}P_{k,j}.
 \label{eq:key-identity}
\end{equation}
Indeed, \eqref{eq:splitting} gives
$C_t\ket{u;k}=\ip{t}{u;j}\ket{u;\ell}$ and hence
\[
 (\bra{\bar u;k}\otimes\bra{u;\ell})\ket{\chi_t}
 =\sqrt{\frac{d_j}{d_k}}\,\ip{t}{u;j}.
\]
Using this in~\eqref{eq:separable-comparison},
\begin{align}
 \sigma_{k,j}\ket{\chi_t}
 &=\sqrt{\frac{d_j}{d_k}}
   \int\ket{\bar u;k}\otimes C_t\ket{u;k}\,\dd\mu(u)\nonumber\\
 &=\frac{\sqrt{d_j}}{d_k}(I\otimes C_t)\ket{\Phi_{d_k}}
 =\frac1{d_k}\ket{\chi_t}.
 \label{eq:identity-proof}
\end{align}
For the second equality, vectorizing~\eqref{eq:resolution} gives
$\int\ket{\bar u;k}\otimes\ket{u;k}\,\dd\mu(u)
 =d_k^{-1/2}\ket{\Phi_{d_k}}$.

Equations~\eqref{eq:flat-choi} and~\eqref{eq:key-identity} now imply
\begin{equation}
 \ER(\Omega_{k,j})
 \le D(\Omega_{k,j}\Vert\sigma_{k,j})
 =\log_2d_k-\log_2d_j.
 \label{eq:conditional-cost}
\end{equation}
Finally, $\Omega_{m,k}=\sum_j b_{k,j}\Omega_{k,j}$, with the different
output sectors embedded in the same space. Convexity of $\ER$ and
\eqref{eq:conditional-cost} prove~\eqref{eq:sector-cost}.
\end{proof}

\section{Entanglement increase under an energy constraint}
\label{sec:energy}

We now allow superpositions of different photon numbers and correlations
with either party's memory.

\begin{lemma}[Block bound]
\label{lem:block}
Let $\rho_{XS^mY}$ be a state on separable Hilbert spaces with
$\ER(XS^m:Y)_\rho<\infty$ and
$\Tr[\Nhat_{S^m}\rho]=m\nu<\infty$, where
$\Nhat_{S^m}=\sum_{a=1}^m\nhat_{S_a}$. Set
$\omega=(\id_{XY}\otimes\cL_\eta^{\otimes m})(\rho)$. Then
\begin{equation}
 \ER(X:B^mY)_\omega
 \le \ER(XS^m:Y)_\rho+m f_\eta(\nu)+g(qm\nu).
 \label{eq:block-bound}
\end{equation}
\end{lemma}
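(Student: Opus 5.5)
Alice will first dephase the total photon number $K$ of $S^m$, then simulate each sector by the irreducible covariance of Section~\ref{sec:tools}, and finally average the sector costs with a concavity argument.

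\textbf{Step 1 (dephasing at the output).} Let $\Delta$ be the projective dephasing onto the total photon number $\ell$ of $B^m$, which is Bob's system. Since $\cL_\eta^{\otimes m}$ is passive, the output photon number decomposes as $\ell=k-j$. Rather than dephasing the output, I would dephase the input: apply $\Delta_K$ (the projections $\Pi_{m,k}$ on $S^m$) to $\rho$. The map $\Delta_K$ commutes with the channel only up to coherences. So I would instead use Lemma~\ref{lem:pinch} on the output side with the \emph{environment} count. That count is not accessible to Bob, so this route is awkward.

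The cleaner route is as follows. The channel commutes with total-number dephasing in the sense $\cL^{\otimes m}\circ\Delta_K$ vs.\ $\Delta_{\ell}\circ\cL^{\otimes m}$, and these differ. Therefore I will do the following. Bob dephases his total number $\ell$, costing at most $H(p_\ell)\le g(\eta m\nu)$ by Lemma~\ref{lem:pinch}. This is a lossy bound, too large.

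The better choice is to bound via the input. Alice holds $S^m$ before transmission and may measure nothing; instead, write $\rho=\Delta_K\rho+(\rho-\Delta_K\rho)$. Compare $\omega$ with $\omega'=\cL^{\otimes m}(\Delta_K\rho)$ using $\ER(\omega)\le \ER(\omega')+D(\omega\Vert\omega')$-type bounds. Lemma~\ref{lem:pinch}'s proof extends: the dephasing by $K$ can be realized on $B^mE^m$ jointly as dephasing by $\ell+j$. Tracing out $E^m$ and using data processing keeps the cost $\le H(p_K)$.

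The term $g(qm\nu)$ suggests the dephased variable is actually the lost number $J$, with mean $qm\nu$. Its entropy is at most $g(\EE J)=g(qm\nu)$ by the maximum-entropy property of geometric distributions. So the plan is: dephase by $J$ in the Stinespring picture, i.e., apply Lemma~\ref{lem:pinch} (in an infinite-dimensional form obtained by truncation plus lower semicontinuity) to the output extended by the register recording $J$. That register can be given to Bob as a classical flag without decreasing $\ER$. Then $\ER(\omega)\le\ER(\tilde\omega)$, where $\tilde\omega=\sum_j$ (branch $j$) $\otimes\proj{j}$, and the flag has cost $\le H(J)\le g(qm\nu)$.

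\textbf{Step 2 (sector simulation).} Conditioned on the input sector $k$ (coherences across $k$ are harmless once $j$ is flagged, since each branch is covariant), the restricted channel $\mathcal T_{m,k}$ is simulated by LOCC from $\Omega_{m,k}$. Using an LOCC instrument that projects $S^m$ onto sectors at Alice's side is monotone on average. Thus \eqref{eq:simulation-cost} and Lemma~\ref{lem:sector-cost} give a contribution $\sum_k p_k c_m(k)$ plus the average input $\ER$, which is $\le\ER(XS^m:Y)_\rho$ by monotonicity on average.

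\textbf{Step 3 (concavity).} I then show $\sum_kp_kc_m(k)\le m f_\eta(\nu)$. The required inequality is $\log d_k-\EE_j\log d_j\le mg(k/m)-mg(qk/m)$ averaged, with $\log d_k\le mg(k/m)$. The main obstacle is the lower bound on $\EE_j\log d_j$. This needs $\log d_j\ge mg(j/m)-(\text{correction})$, and then concavity of $g$ with Jensen over binomial $j$ goes the wrong way. I expect the $g(qm\nu)$ slack is exactly what absorbs these corrections. Indeed $c_m(k)$ equals the mutual-information-like quantity $S(\pi_k)-\EE_j S(\pi_j)$, which can be bounded by a Holevo-type argument on thermal states, plus $H(J)$. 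So I would prove $\sum_kp_kc_m(k)+\text{flag}\le m f_\eta(\nu)+g(qm\nu)$ jointly. This works by writing $c_m(k)\le S(\text{output mixture})-S(E\mid J)$ and applying the maximum-entropy principle to the $m$-mode environment with mean $qm\nu$ and to the input with mean $m\nu$. That yields $mg(\nu)-mg(q\nu)$ up to the $H(J)$ term.
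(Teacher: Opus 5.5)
There is a genuine gap. Your chain never pays for the coherences between input photon-number sectors, and the sector simulation cannot be applied until they are removed. Flagging the number $J$ of lost photons does not remove them. A classical flag held by Bob is free, since $\ER\bigl(\sum_jp_j\,\omega_j\otimes\proj{j}\bigr)=\sum_jp_j\ER(\omega_j)$. But each branch $\omega_j$ still carries coherences between input sectors $k,k'$ into coherences between output sectors $k-j,k'-j$, so it is not the output of a single irreducible sector. The claim that these coherences are ``harmless once $j$ is flagged'' already fails for $m=1$, $\eta=1$. Take $\rho=\Phi_K$ between $X$ and $S$ in the Fock basis $\ket{0},\dots,\ket{K-1}$, with trivial $Y$. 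Then $\ER(\rho)=0$, $J=0$ deterministically, and $c_1(k)=0$ because $d_k=1$. Your chain would therefore give $\ER(\omega)\le0$, whereas $\omega=\Phi_K$ has $\ER(\omega)=\log_2K$. The lemma itself survives, since $g((K-1)/2)\ge\log_2K$. The missing cost is $D(\rho\Vert\Delta_S\rho)\le H(K)$. Contrary to what you wrote, $\Delta_B\circ\cL_\eta^{\otimes m}=\cL_\eta^{\otimes m}\circ\Delta_S$ holds exactly by phase covariance, because both dephasings are averages of $e^{i\theta\Nhat}$. This identity is what lets the paper apply Lemma~\ref{lem:pinch} on Bob's side with $D(\omega\Vert\Delta_B\omega)\le D(\rho\Vert\Delta_S\rho)\le H(K)$. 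Your remark about dephasing by $\ell+j$ on $B^mE^m$ was the right route, and you abandoned it.

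The second missing idea is how to absorb $H(K)$, which on its own can be as large as $g(m\nu)$. Put $\zeta=\sum_kp_k\pi_k$ and $\zeta'=\cL_q^{\otimes m}(\zeta)=\sum_j\Pr(J=j)\,\pi_j$, the latter by \eqref{eq:splitting}. Orthogonality of the sectors gives $H(K)+\EE\log_2d_K=S(\zeta)$ and $\EE\log_2d_J=S(\zeta')-H(J)$. The total cost $H(K)+\EE c_m(K)$ is therefore exactly $S(\zeta)-S(\zeta')+H(J)$. Data processing under $\cL_q^{\otimes m}$, which sends $\tau_\nu^{\otimes m}$ to $\tau_{q\nu}^{\otimes m}$, gives $mg(\nu)-S(\zeta)=D(\zeta\Vert\tau_\nu^{\otimes m})\ge D(\zeta'\Vert\tau_{q\nu}^{\otimes m})=mg(q\nu)-S(\zeta')$. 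Hence $S(\zeta)-S(\zeta')\le mf_\eta(\nu)$, and only $H(J)\le g(qm\nu)$ remains. Your Step~3 has no substitute for this. Maximum-entropy bounds only upper-bound $S(\zeta')$, which is the wrong direction for $\EE\log_2d_J$; that is precisely the obstacle you ran into. So $g(qm\nu)$ does not pay for a $J$-dephasing; it pays for turning $\EE\log_2d_J$ into the entropy $S(\zeta')$. Your flag could still be used: dephasing $\ell$ inside each branch costs at most $H(K\mid J)$, and the same computation then gives $S(\zeta)-S(\zeta')+H(J\mid K)$. That works only with both ingredients above. You would also need the paper's cutoff argument, because Lemma~\ref{lem:pinch} and the simulation are finite-dimensional in $X$, $Y$ and the photon number. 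That argument uses local filtering, $p_L\ER(\rho_L)\le\ER(\rho)$, followed by lower semicontinuity of $\ER$.
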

\begin{proof}
We first suppose that $X,Y$ are finite-dimensional and that $\rho$ has
finite photon-number support. All subsequent entropies are then ordinary
finite quantities. The infinite-dimensional extension is given at the
end of the proof.

\paragraph{Step 1: pay for coherences between sectors.}
Let $\Delta_S$ and $\Delta_B$ dephase the total photon number of the
$m$ input and output modes, respectively. Phase covariance gives
\begin{equation}
 \Delta_B\circ\cL_\eta^{\otimes m}
 =\cL_\eta^{\otimes m}\circ\Delta_S.
 \label{eq:phase-covariance}
\end{equation}
For example, both dephasings are averages over
$e^{i\theta\Nhat}$, and the channel intertwines these phase rotations.
Let $K$ be the input total photon number, with probabilities $p_k$,
and write $\Delta_S\rho=\sum_k p_k\rho_k$ for the normalized conditional
states. Put $\omega_k=(\id\otimes\cL_\eta^{\otimes m})(\rho_k)$.
Lemma~\ref{lem:pinch}, data processing, and
\eqref{eq:phase-covariance} yield
\begin{equation}
 \ER(\omega)\le\ER(\Delta_B\omega)+H(K).
 \label{eq:pay-coherence}
\end{equation}
Explicitly, the additional term is bounded as
$D(\omega\Vert\Delta_B\omega)
 \le D(\rho\Vert\Delta_S\rho)\le H(K)$.
By convexity, the sector simulation, and average LOCC monotonicity,
\begin{align}
 \ER(\Delta_B\omega)
 &\le\sum_k p_k\ER(\omega_k)\nonumber\\
 &\le\sum_k p_k\bigl(\ER(\rho_k)+c_m(k)\bigr)
 \le\ER(\rho)+\EE c_m(K).
 \label{eq:sector-average}
\end{align}
Here the bipartition is $XS^m:Y$ before transmission and $X:B^mY$ after
transmission. In particular, measuring $K$ is a local instrument on
Alice's side. Combining the two bounds gives
\begin{equation}
 \ER(\omega)-\ER(\rho)
 \le H(K)+\EE\log_2d_K-\EE\log_2d_J,
 \label{eq:photon-bound}
\end{equation}
where the joint law is $\Pr(K=k,J=j)=p_kb_{k,j}$.

\paragraph{Step 2: express the photon bound as an entropy difference.}
Introduce the auxiliary states
\begin{equation}
 \zeta:=\sum_k p_k\pi_k,
 \qquad
 \zeta':=\cL_q^{\otimes m}(\zeta)=\sum_j\Pr(J=j)\pi_j.
 \label{eq:aux-states}
\end{equation}
The output identity follows from~\eqref{eq:resolution} and
\eqref{eq:splitting}. Orthogonality of the sectors gives
\begin{equation}
 S(\zeta)=H(K)+\EE\log_2d_K,
 \qquad S(\zeta')=H(J)+\EE\log_2d_J.
 \label{eq:sector-entropies}
\end{equation}
Therefore the right-hand side of~\eqref{eq:photon-bound} is
$S(\zeta)-S(\zeta')+H(J)$.

\paragraph{Step 3: compare with thermal states.}
The one-mode thermal state with mean photon number $x$ is
\begin{equation}
 \tau_x=\frac1{x+1}\sum_{k\ge0}\left(\frac{x}{x+1}\right)^k\proj{k},
 \qquad S(\tau_x)=g(x).
 \label{eq:thermal}
\end{equation}
For $x>0$ and any $m$-mode state $\theta$ of total mean photon number
$mx$, the logarithm of the thermal reference gives
\begin{equation}
 D(\theta\Vert\tau_x^{\otimes m})=m g(x)-S(\theta).
 \label{eq:thermal-relative}
\end{equation}
In particular, positivity of relative entropy proves
$S(\theta)\le m g(x)$. Binomial thinning of the geometric distribution
in~\eqref{eq:thermal} gives $\cL_q(\tau_x)=\tau_{qx}$.
The states in~\eqref{eq:aux-states} have total means $m\nu$ and $qm\nu$,
respectively. For $q>0$ and $\nu>0$, data processing gives
\begin{align}
 m g(\nu)-S(\zeta)
 &=D(\zeta\Vert\tau_\nu^{\otimes m})\nonumber\\
 &\ge D(\zeta'\Vert\tau_{q\nu}^{\otimes m})
 =m g(q\nu)-S(\zeta').
 \label{eq:thermal-dpi}
\end{align}
It follows that $S(\zeta)-S(\zeta')\le m f_\eta(\nu)$.
If $q\nu=0$, the state $\zeta'$ is vacuum, and the same inequality follows
from the entropy bound just proved; if $\nu=0$, both states are vacuum.
Finally, $J$ is a nonnegative integer with mean $qm\nu$. Comparing its
distribution with the geometric distribution of that mean gives
$H(J)\le g(qm\nu)$, again by nonnegativity of relative entropy.
Substitution in~\eqref{eq:photon-bound} proves~\eqref{eq:block-bound}.

\paragraph{Step 4: remove all cutoffs.}
Choose increasing finite-rank projections $P_L^X\to I_X$ and
$P_L^Y\to I_Y$ strongly, and define the product projection
\[
 T_L=P_L^X\otimes\Pi_{m,\le L}^{S^m}\otimes P_L^Y,
 \qquad p_L=\Tr(T_L\rho),
 \qquad \rho_L=T_L\rho T_L/p_L.
\]
For sufficiently large $L$, $p_L>0$. We have $p_L\to1$ and
$\rho_L\to\rho$ in trace norm. The projections commute with the photon
number, so monotone convergence gives
$\nu_L:=m^{-1}\Tr(\Nhat\rho_L)\to\nu$.
The successful projection is a local filtering outcome; hence
$p_L\ER(\rho_L)\le\ER(\rho)$.
Writing $\omega_L=(\id\otimes\cL_\eta^{\otimes m})(\rho_L)$ and applying
the finite-dimensional bound,
\[
 \ER(\omega_L)\le\frac{\ER(\rho)}{p_L}
          +m f_\eta(\nu_L)+g(qm\nu_L).
\]
Now $\omega_L\to\omega$ in trace norm. By lower semicontinuity of
$\ER$~\cite[Corollary~20]{LS23},
\[
 \ER(\omega)\le\liminf_{L\to\infty}\ER(\omega_L)
 \le\ER(\rho)+m f_\eta(\nu)+g(qm\nu),
\]
where the last inequality uses $p_L\to1$, $\nu_L\to\nu$, and continuity
of $f_\eta,g$. No entropy assumption on $X$ or $Y$ is needed.
\end{proof}

\begin{corollary}[Amortized regularized bound]
\label{cor:amortized}
Let $\rho_{XSY}$ satisfy $\ER(XS:Y)_\rho<\infty$ and
$\Tr(\nhat\rho_S)=\nu<\infty$. For
$\omega=(\id_{XY}\otimes\cL_\eta)(\rho)$,
\begin{equation}
 \ERi(X:BY)_\omega\le\ERi(XS:Y)_\rho+f_\eta(\nu).
 \label{eq:amortized}
\end{equation}
\end{corollary}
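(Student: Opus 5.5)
We apply Lemma~\ref{lem:block} to $r$ independent copies of $\rho$, divide by $r$, and let $r\to\infty$. Concretely, fix $r\ge1$ and consider $\rho^{\otimes r}$ on $X^rS^rY^r$, with the bipartition $X^rS^r:Y^r$. This is a state of the form required by Lemma~\ref{lem:block}, with $m=r$ modes $S^r$, memories $X^r$ on Alice's side and $Y^r$ on Bob's side. The total mean photon number is $\Tr[\Nhat_{S^r}\rho^{\otimes r}]=r\nu$, so the per-mode mean is exactly $\nu$. Moreover $\ER(X^rS^r:Y^r)_{\rho^{\otimes r}}\le r\,\ER(XS:Y)_\rho<\infty$ by subadditivity, so the finiteness hypothesis holds. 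The output $(\id\otimes\cL_\eta^{\otimes r})(\rho^{\otimes r})$ equals $\omega^{\otimes r}$ after reordering the systems. Lemma~\ref{lem:block} therefore gives
\begin{equation*}
 \ER(X^r:B^rY^r)_{\omega^{\otimes r}}
 \le \ER(X^rS^r:Y^r)_{\rho^{\otimes r}}+r f_\eta(\nu)+g(qr\nu).
\end{equation*}

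Next we divide by $r$. The left-hand side divided by $r$ converges to $\ERi(X:BY)_\omega$. This limit exists because the displayed inequality already shows $\ER(\omega^{\otimes r})<\infty$, in particular $\ER(\omega)<\infty$, and then subadditivity (Fekete) applies. The first term on the right divided by $r$ converges to $\ERi(XS:Y)_\rho$. The remaining task is to check that the dephasing overhead vanishes after normalization, i.e.\ $g(qr\nu)/r\to0$. This holds because $g(x)=\log_2(x+1)+x\log_2(1+1/x)\le\log_2(x+1)+\log_2 e$, which grows only logarithmically in $x$. It is exactly the ``logarithmic cost'' mentioned in the proof outline. Taking $r\to\infty$ yields \eqref{eq:amortized}.

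There is no real obstacle beyond verifying the hypotheses of Lemma~\ref{lem:block}, namely that the reordered tensor power fits its bipartition and that the energy condition holds with equality at mean $\nu$. The only point needing care is the existence of the regularized limits. If one prefers not to rely on finiteness of $\ER(\omega)$ being inherited, we can work with $\liminf$ on the left-hand side. Since $\ERi$ is defined as a limit that exists under finiteness, and finiteness of $\ER(\omega)$ follows from the $r=1$ case of the block bound, the argument closes. The edge cases $\nu=0$ or $q=0$ are covered by $g(0)=0$.
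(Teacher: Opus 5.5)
Your proposal is correct and follows essentially the same route as the paper. The paper also applies Lemma~\ref{lem:block} to $\rho^{\otimes r}$ with the memories grouped, divides by $r$, and uses that $g(qr\nu)/r\to0$ by logarithmic growth, with finiteness of $\ER(\omega)$ from the single-copy bound guaranteeing that both regularizations exist.
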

\begin{proof}
Apply Lemma~\ref{lem:block} to $r$ copies of $\rho$, grouping all of
Alice's memories and all of Bob's memories. Their $r$ input modes have
total mean photon number $r\nu$, so
\[
 \ER(\omega^{\otimes r})\le\ER(\rho^{\otimes r})
             +r f_\eta(\nu)+g(qr\nu).
\]
Divide by $r$ and take $r\to\infty$. The last term divided by $r$
tends to zero, since $g(x)=O(\log(x+1))$ as $x\to\infty$.
The single-copy block bound also ensures $\ER(\omega)<\infty$, so both
regularizations exist. This proves~\eqref{eq:amortized}.
\end{proof}

\section{Private states and the capacity theorem}
\label{sec:privacy}

We next relate $\ERi$ to an approximate secret key. This step keeps the
local shield systems and places no bound on their dimensions.

\subsection{A privacy test that survives regularization}

A private state with key dimension $d$ has the form~\cite{HHHO05b,HHHO09,WTB17}
\begin{equation}
 \gamma_{AA'BB'}=U(\Phi_{d,AB}\otimes\theta_{A'B'})U^\dagger,
 \qquad
 U=\sum_{a,b=1}^d\proj{a,b}_{AB}\otimes U_{ab},
 \label{eq:private-state}
\end{equation}
where the $U_{ab}$ are unitaries and $A',B'$ are the shields.
Measuring $A,B$ in the displayed basis gives a uniform common key,
independent of every purifying system.
Its privacy test is the projector
\begin{equation}
 \Pi_\gamma=U(\Phi_{d,AB}\otimes I_{A'B'})U^\dagger.
 \label{eq:privacy-test}
\end{equation}
We use squared fidelity,
$F(\rho,\sigma)=\norm{\sqrt\rho\sqrt\sigma}_1^2$.
The standard privacy-test properties are~\cite[Lemmas~9--10]{WTB17}
\begin{equation}
 \Tr(\Pi_\gamma\omega)\ge F(\omega,\gamma),
 \qquad
 \Tr(\Pi_\gamma\sigma)\le\frac1d
 \quad\text{for every }\sigma\in\SEP(AA':BB').
 \label{eq:privacy-properties}
\end{equation}
The first inequality follows by applying the test to both states and
using fidelity monotonicity. To check the second, take a normalized
product vector $\ket{\alpha}_{AA'}\ket{\beta}_{BB'}$ and write
$\ket{\alpha}=\sum_a\ket a\ket{\alpha_a}$ and
$\ket{\beta}=\sum_a\ket a\ket{\beta_a}$. Then
\begin{equation}
 \bra{\alpha,\beta}\Pi_\gamma\ket{\alpha,\beta}
 =\frac1d\norm{\sum_a U_{aa}^\dagger
                 (\ket{\alpha_a}\otimes\ket{\beta_a})}^2
 \le\frac1d\left(\sum_a\norm{\alpha_a}\norm{\beta_a}\right)^2
 \le\frac1d.
 \label{eq:privacy-soundness-proof}
\end{equation}
The triangle and Cauchy--Schwarz inequalities give the two bounds.
Convexity and trace-norm closure extend them to every separable state,
including infinite-dimensional shields.

\begin{lemma}[Regularized privacy bound]
\label{lem:privacy}
Suppose $d\ge2$, $0\le\varepsilon\le\tfrac12$, and
$F(\omega,\gamma)\ge1-\varepsilon$ for a private state of key dimension
$d$. If $\ER(\omega)<\infty$, then
\begin{equation}
 \ERi(\omega)\ge(1-\varepsilon)\log_2d-h_2(\varepsilon),
 \label{eq:privacy-bound}
\end{equation}
where $h_2(x)=-x\log_2x-(1-x)\log_2(1-x)$.
\end{lemma}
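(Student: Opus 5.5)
The plan is to bound $\ER(\omega^{\otimes r})$ from below for every $r$ using the privacy test of the $r$-fold tensor power $\gamma^{\otimes r}$, then divide by $r$ and let $r\to\infty$. We first note that $\gamma^{\otimes r}$ is again a private state, with key systems $A^r,B^r$, key dimension $d^r$, and shields $A'^rB'^r$. Its twisting unitary is the tensor product of the individual twistings, so it still has the controlled form~\eqref{eq:private-state}. Its privacy test is $\Pi_{\gamma^{\otimes r}}=\Pi_\gamma^{\otimes r}$. By~\eqref{eq:privacy-properties}, every separable $\sigma$ on $(AA')^r:(BB')^r$ satisfies $\Tr(\Pi_\gamma^{\otimes r}\sigma)\le d^{-r}$.

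The difficulty is that $\Tr(\Pi_\gamma^{\otimes r}\omega^{\otimes r})=\Tr(\Pi_\gamma\omega)^r$ decays exponentially. A direct test would therefore give only a vacuous bound. Instead, we use a typicality test. Let $p:=\Tr(\Pi_\gamma\omega)\ge1-\varepsilon$. Since $\Pi_\gamma$ is a projector, the operators $\Pi_\gamma^{(i)}$ acting on the different copies commute. Their joint spectral decomposition yields projectors $Q_s$, onto the subspaces where exactly $s$ of the $r$ copies pass the test. Consider the two-outcome measurement $\{T_r,I-T_r\}$ with
\[
 T_r=\sum_{s\ge(1-\varepsilon-\delta)r}Q_s .
\]
On $\omega^{\otimes r}$, the number of passes is Binomial$(r,p)$, so $\Tr(T_r\omega^{\otimes r})\to1$ by the law of large numbers. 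For separable $\sigma$, we bound each $Q_s$ by a sum of $\binom{r}{s}$ products containing $s$ factors $\Pi_\gamma$ and $r-s$ factors $I-\Pi_\gamma\le I$. Each such product is dominated by the test of a private state on $s$ of the copies, after tracing out the rest. Partial traces of separable states are separable, so each term has expectation at most $d^{-s}$. This gives
\[
 \Tr(T_r\sigma)\le\sum_{s\ge(1-\varepsilon-\delta)r}\binom{r}{s}d^{-s}
 \le 2^{\,r h_2(\varepsilon+\delta)-(1-\varepsilon-\delta)r\log_2 d+o(r)} .
\]
For the binomial estimate we use $\binom{r}{s}\le2^{rh_2(s/r)}$ and $\varepsilon+\delta\le1/2$. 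The regime $s/r$ near $1$ is only better, because there $d^{-s}$ dominates.

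Data processing under this binary measurement then gives
\[
 D(\omega^{\otimes r}\Vert\sigma)\ge -h_2(t_r)+t_r\log_2\frac1{\Tr(T_r\sigma)},\qquad t_r=\Tr(T_r\omega^{\otimes r}).
\]
We minimize over $\sigma$, divide by $r$, and use $t_r\to1$. This yields $\ERi(\omega)\ge(1-\varepsilon-\delta)\log_2d-h_2(\varepsilon+\delta)$. Letting $\delta\to0$ and using continuity of $h_2$ gives~\eqref{eq:privacy-bound}. The hypothesis $\ER(\omega)<\infty$ only ensures that the limit defining $\ERi$ exists.

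The main obstacle is the domination step for the separable state. Products of $\Pi_\gamma$ on a subset of the copies, with complements on the others, are not themselves private-state tests. The clean route is to drop the factors $I-\Pi_\gamma$ using $0\le I-\Pi_\gamma\le I$ and the commutativity of the copy projectors. We bound $Q_s\le\sum_{|S|=s}\Pi_\gamma^{\otimes S}\otimes I$, which holds because the $Q_s$ are disjoint spectral projectors and each sits below the corresponding sum. We must check that this is legitimate with infinite-dimensional shields. It should be, since all operators involved are bounded commuting projectors and the soundness bound already extends to infinite-dimensional shields.
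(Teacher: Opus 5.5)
Your argument is correct, but it reaches the bound by a different mechanism than the paper.

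The paper never thresholds. It measures all $r$ tests and lets $Z$ count the passes. For every separable $\sigma_r$ it bounds the exponential moment $\EE_Q 2^{tZ}\le(1+(2^t-1)/d)^r$, by expanding $\prod_a(I+(2^t-1)\Pi_a)$ into subset products and using the same subset estimate $\Tr[\sigma_r\prod_{a\in S}\Pi_a]\le d^{-|S|}$ that you use. It then applies the variational inequality $D(P\Vert Q)\ge t\,\EE_PZ-\log_2\EE_Q2^{tZ}$ and optimizes over $t\ge0$. This gives $\ER(\omega^{\otimes r})\ge r\,d_2(p\Vert 1/d)$ exactly, for every $r$. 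So the paper needs no law of large numbers, no $\delta$, and no binomial-coefficient estimates. It also obtains the slightly stronger quantity $d_2(p\Vert1/d)$ before weakening to $(1-\varepsilon)\log_2 d-h_2(\varepsilon)$.

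Your threshold test $T_r$ with binary data processing is the more familiar typical-subspace route. It gives the bound only in the limit $r\to\infty$, $\delta\to0$. Because you discard the factors $I-\Pi_\gamma$, it also loses the term $(1-p)\log_2\frac{d}{d-1}$, which is harmless for the stated inequality.

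One small repair is needed. As written, your binomial estimate assumes $\varepsilon+\delta\le\tfrac12$, which leaves no room for $\delta>0$ when $\varepsilon=\tfrac12$. With $\delta=0$ at that endpoint, $t_r$ need not tend to $1$. Two easy fixes:
\begin{itemize}
\item The map $x\mapsto h_2(x)-x\log_2 d$ is decreasing for $x\ge 1/(d+1)$, and $1/(d+1)\le\tfrac13$. So the left endpoint dominates whenever $1-\varepsilon-\delta>\tfrac13$.
\item More simply, note that $T_r\le\sum_{|S|=s_0}\prod_{a\in S}\Pi_a$ with $s_0=\lceil(1-\varepsilon-\delta)r\rceil$. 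Each joint spectral projector with $|S'|\ge s_0$ passes appears $\binom{|S'|}{s_0}\ge1$ times on the right. This gives $\Tr(T_r\sigma)\le\binom{r}{s_0}d^{-s_0}$ directly, with no monotonicity argument.
\end{itemize}
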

\begin{proof}
Put $\Pi=\Pi_\gamma$ and $p=\Tr(\Pi\omega)\ge1-\varepsilon$.
For any state $\sigma_r$ separable between all of Alice's systems and all
of Bob's systems, and any subset $S\subseteq\{1,\ldots,r\}$,
\begin{equation}
 \Tr\!\left[\sigma_r\prod_{a\in S}\Pi_a\right]\le d^{-|S|}.
 \label{eq:subset-tests}
\end{equation}
Indeed, the marginal on $S$ is separable, and the product test is itself
a privacy test with key dimension $d^{|S|}$. For $t\ge0$, expansion with
nonnegative coefficients therefore gives
\begin{equation}
 \Tr\!\left[\sigma_r\prod_{a=1}^r(I+(2^t-1)\Pi_a)\right]
 \le\sum_{S\subseteq\{1,\ldots,r\}}(2^t-1)^{|S|}d^{-|S|}
 =\left(1+\frac{2^t-1}{d}\right)^r.
 \label{eq:mgf}
\end{equation}
Measure the $r$ tests separately, and let $Z$ count their successes.
Let $P,Q$ be the resulting distributions for $\omega^{\otimes r}$ and
$\sigma_r$. Then $\EE_P Z=rp$, and~\eqref{eq:mgf} bounds
$\EE_Q2^{tZ}$. Data processing and the classical variational inequality
$D(P\Vert Q)\ge t\EE_P Z-\log_2\EE_Q2^{tZ}$ yield
\begin{equation}
 D(\omega^{\otimes r}\Vert\sigma_r)
 \ge r\left[tp-\log_2\!\left(1+\frac{2^t-1}{d}\right)\right].
 \label{eq:variational-test}
\end{equation}
The classical inequality follows by comparing $P$ with the probability
distribution proportional to $2^{tZ}Q$ and using nonnegativity of relative
entropy. Since $p\ge1-\varepsilon\ge1/d$, the supremum of the bracket
over $t\ge0$ is the binary relative entropy $d_2(p\Vert1/d)$.
Moreover,
\[
 d_2(p\Vert1/d)
 =p\log_2d+(1-p)\log_2\!\frac{d}{d-1}-h_2(p)
 \ge(1-\varepsilon)\log_2d-h_2(\varepsilon).
\]
Taking the infimum over $\sigma_r$ in~\eqref{eq:variational-test}, dividing
by $r$, and then taking $r\to\infty$ proves~\eqref{eq:privacy-bound}.
\end{proof}

\subsection{Converse for adaptive protocols}
\label{sec:converse}

To apply the privacy bound to an actual key protocol, keep the systems
that each local operation would otherwise discard, and postpone the final
key measurements. Public messages are still sent to Eve, and local copies
of their values are retained. Tracing out these extra local systems recovers
the original protocol after every step: all transmitted states, and hence
all their mean energies, are unchanged. This is the usual private-state
representation~\cite{HHHO09}; its energy-preserving implementation on
separable Hilbert spaces is described in~\cite[Sections~II.F and~IV]{DSW18}.

Here is the error conversion. Let $\ket\Psi_{ABA'B'E_{\rm all}}$ be the
resulting pure state, with $A,B$ the unmeasured key registers, and let
$\omega=\Tr_{E_{\rm all}}\proj\Psi$. Write
$\ket\Psi=\sum_{a,b}\ket{a,b}\ket{\psi_{ab}}$, where the vectors
$\ket{\psi_{ab}}$ need not be normalized. The post-measurement state is
$\kappa$. The Fuchs--van de Graaf inequality~\cite{FvdG99} gives root
fidelity at least $1-e$ between $\kappa$ and its ideal key comparison.
By Uhlmann's theorem~\cite{Uhl76}, applied to each diagonal key block,
there are normalized purifications $\ket{\xi_a}_{A'B'E_{\rm all}}$ of
$\kappa_{E_{\rm all}}$, with phases chosen so that
\[
 \frac1{\sqrt d}\sum_a\ip{\xi_a}{\psi_{aa}}
 =\sqrt{F\!\left(\kappa,
       \left(d^{-1}\sum_a\proj{a,a}\right)\otimes\kappa_{E_{\rm all}}\right)}
 \ge1-e.
\]
We may append an unused local pure ancilla and choose a common shield
space with a countably infinite-dimensional complement to each Schmidt
support. The partial isometries relating purifications of the same state
then extend to unitaries on this shield. Therefore
$\ket\Gamma=d^{-1/2}\sum_a\ket{a,a}\ket{\xi_a}$ reduces to a private state
$\gamma$ on $AA'BB'$. Fidelity monotonicity then gives
\[
 F(\omega,\gamma)\ge|\ip\Gamma\Psi|^2\ge(1-e)^2.
\]
Thus a key protocol of error $e$ supplies an approximate private state
with infidelity at most $2e-e^2$, without changing its optical inputs.
For entanglement generation the target already has the private-state
form, with trivial shields.

\begin{proposition}[Finite-block converse]
\label{prop:finite}
Consider an $n$-use adaptive protocol satisfying
\eqref{eq:energy-constraint}, whose final state has squared fidelity at
least $1-\varepsilon$ with a private state of key dimension $d\ge2$.
For $0\le\varepsilon\le\tfrac12$,
\begin{equation}
 (1-\varepsilon)\log_2d
 \le n f_\eta(N)+h_2(\varepsilon).
 \label{eq:finite-converse}
\end{equation}
\end{proposition}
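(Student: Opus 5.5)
We combine the amortized bound of Corollary~\ref{cor:amortized} with the regularized privacy bound of Lemma~\ref{lem:privacy}, telescoping over the $n$ channel uses. Fix the adaptive protocol in the purified, private-state form described above, so that every discarded system is retained locally and all transmitted states and energies are unchanged. Let $\rho^{(i)}_{X_iS_iY_i}$ be the (unconditional, averaged over past outcomes kept as local classical registers) state just before the $i$-th channel use. Here $X_i$ collects Alice's memories, $S_i$ is the input mode, and $Y_i$ is Bob's memory. Let $\omega^{(i)}$ be the state right after the $i$-th use. Write $\nu_i:=\Tr[\nhat\rho^{(i)}_{S_i}]$; the constraint \eqref{eq:energy-constraint} reads $\frac1n\sum_i\nu_i\le N$. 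The initial state is a product of local states, possibly mixed with public randomness held by both parties. Such a state is separable, so $\ERi=0$ before the first LOCC round.

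The chain of inequalities proceeds as follows. Between uses $i$ and $i+1$ the parties perform LOCC. By LOCC monotonicity of $\ERi$, the state $\rho^{(i+1)}$ has $\ERi(X_{i+1}S_{i+1}:Y_{i+1})\le\ERi(X_i:B_iY_i)_{\omega^{(i)}}$. Here a classical message is copied to both sides, which is LOCC, and keeping all outcomes as registers makes the monotonicity apply to the unconditional state. Corollary~\ref{cor:amortized} gives $\ERi(\omega^{(i)})\le\ERi(\rho^{(i)})+f_\eta(\nu_i)$. Its hypothesis $\ER(\rho^{(i)})<\infty$ is propagated inductively by the single-copy block bound of Lemma~\ref{lem:block} with $m=1$ and by LOCC monotonicity of $\ER$. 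Summing over $i$, the final state $\omega$ satisfies $\ERi(\omega)\le\sum_i f_\eta(\nu_i)$, where the bipartition groups key and shield registers by party. The function $f_\eta$ is concave and nondecreasing. This should follow from $f_\eta(x)=g(x)-g(qx)$: one has $f_\eta'(x)=\log_2(1+1/x)-q\log_2(1+1/(qx))\ge0$, and concavity can be checked from the second derivative or quoted as known. Hence $\sum_i f_\eta(\nu_i)\le n f_\eta(\frac1n\sum\nu_i)\le nf_\eta(N)$.

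Since $\ER(\omega)<\infty$ and $F(\omega,\gamma)\ge1-\varepsilon$, Lemma~\ref{lem:privacy} gives $(1-\varepsilon)\log_2 d-h_2(\varepsilon)\le\ERi(\omega)\le nf_\eta(N)$, which is \eqref{eq:finite-converse}.

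The main obstacle is justifying the monotonicity step for general LOCC rounds on infinite-dimensional memories, including countable message alphabets, while keeping everything unconditional. I would handle it by representing each round as a local instrument whose outcome is copied into a classical register on both sides. This is a local channel plus classical copying, which maps separable states to separable states tensor-power-wise, so data processing gives monotonicity of each $\ER(\cdot^{\otimes r})$ and hence of $\ERi$. A second point to verify is the concavity of $f_\eta$, which is needed for the Jensen step. If it resisted a direct proof, I would instead note that Corollary~\ref{cor:amortized} could be applied to a block of all $n$ uses at once, but adaptivity prevents this, so concavity is essential.
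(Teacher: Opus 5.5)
Your proposal is correct and follows the paper's proof essentially step for step: you telescope Corollary~\ref{cor:amortized} with LOCC monotonicity of $\ERi$ starting from a separable initial state, propagate finiteness of $\ER$ via the $m=1$ case of Lemma~\ref{lem:block}, apply Jensen using concavity and monotonicity of $f_\eta$, and finish with Lemma~\ref{lem:privacy}. The one step you left unverified, concavity, is settled in the paper by the explicit computation $f_\eta''(x)=-\eta/(\ln 2\, x(x+1)(qx+1))\le 0$; together with $\lim_{x\to\infty}f_\eta'(x)=0$, this also gives $f_\eta'\ge0$.
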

\begin{proof}
Let $\nu_i=\Tr(\nhat\bar\rho_{S_i})$ be the unconditional mean energy at
use $i$. Keep all local memories, including the registers needed for
adaptivity. Corollary~\ref{cor:amortized} applies to the state before each
transmission, while the intermediate LOCC maps cannot increase $\ERi$.
The initial state has zero entanglement, so telescoping gives
\begin{equation}
 \ERi(\omega_{\mathrm{final}})\le\sum_{i=1}^n f_\eta(\nu_i).
 \label{eq:telescoping}
\end{equation}
All the quantities $\ER$ encountered here are finite: starting from
zero, the case $m=1$ of Lemma~\ref{lem:block} bounds their increase at
each use, and every $\nu_i$ is finite by~\eqref{eq:energy-constraint}.

The function $f_\eta$ is nondecreasing and concave on $[0,\infty)$.
For $x>0$,
\begin{equation}
 f_\eta''(x)=-\frac{\eta}{\ln2\,x(x+1)(qx+1)}\le0,
 \qquad \lim_{x\to\infty}f_\eta'(x)=0.
 \label{eq:concavity}
\end{equation}
The derivative is therefore nonnegative; continuity covers $x=0$.
Jensen's inequality and the energy budget give
\[
 \sum_{i=1}^n f_\eta(\nu_i)
 \le n f_\eta\!\left(\frac1n\sum_i\nu_i\right)
 \le n f_\eta(N).
\]
Apply Lemma~\ref{lem:privacy} to~\eqref{eq:telescoping} to obtain
\eqref{eq:finite-converse}.
\end{proof}

For operational key error $e$, Proposition~\ref{prop:finite} applies with
$\varepsilon=2e-e^2$ whenever this is at most $1/2$.
For any sequence of codes with $\varepsilon_n\to0$,
\eqref{eq:finite-converse} implies
$\limsup_n n^{-1}\log_2d_n\le f_\eta(N)$.
This proves the converse for both capacities in
Theorem~\ref{thm:main}.

\subsection{Achievability}
\label{sec:achievability}

Alice prepares independent copies of the two-mode squeezed vacuum
\begin{equation}
 \ket{\psi_N}_{RS}
 =\frac1{\sqrt{N+1}}\sum_{k\ge0}
   \left(\frac{N}{N+1}\right)^{k/2}\ket{k}_R\ket{k}_S
 \label{eq:tmsv}
\end{equation}
and sends $S$ through $\cL_\eta$. Each transmitted mode has mean photon
number $N$. For the output $\omega_{RB}$, its purification $RBE$ is
pure, $\omega_R=\tau_N$, and $\omega_E=\tau_{qN}$. Hence
\begin{equation}
 S(R)_\omega-S(RB)_\omega=g(N)-g(qN)=f_\eta(N).
 \label{eq:reverse-hashing}
\end{equation}
The hashing inequality, with classical communication from Bob to Alice,
achieves this rate~\cite{DW05,GPLS09}.

Here is a finite-dimensional justification that preserves the exact
energy budget. For $N>0$, truncate~\eqref{eq:tmsv} to $0\le k\le L$ and
normalize, obtaining $\ket{\psi_{N,L}}$. Its transmitted mean $N_L$ is
at most $N$ and tends to $N$. Put
$s_L=\Tr(\Pi_{\le L}\tau_N)$ and
$\tau_{N,L}=\Pi_{\le L}\tau_N\Pi_{\le L}/s_L$.
Then
\[
 D(\tau_{N,L}\Vert\tau_N)=-\log_2s_L\longrightarrow0,
 \qquad
 D(\cL_q(\tau_{N,L})\Vert\tau_{qN})\le-\log_2s_L.
\]
The logarithm of a positive-temperature thermal state is affine in
photon number. Since the two means tend to $N$ and $qN$, respectively,
these bounds imply
$S(\tau_{N,L})\to g(N)$ and
$S(\cL_q(\tau_{N,L}))\to g(qN)$; for $q=0$ the latter state is vacuum.
Finite-dimensional reverse hashing therefore achieves every rate below
$f_\eta(N)$. For $N=0$ the claimed rate is zero.
Finally, measuring the distilled maximally entangled states gives the
same number of secret bits. Thus
\[
 f_\eta(N)\le Q_2(\cL_\eta,N)\le K(\cL_\eta,N)\le f_\eta(N),
\]
which completes the proof of Theorem~\ref{thm:main}.

\paragraph{Note added and AI statement.} The main result of this work was obtained on 17 September 2026 with a single prompt of ChatGPT Astra (OpenAI). While refining the manuscript, we became aware of \cite{Pir26a}, whose author independently obtained the same result.

\end{document}